 \newtheorem{thm}{Theorem}[section]
 \newtheorem{cor}[thm]{Corollary}
 \newtheorem{lem}[thm]{Lemma}
 \newtheorem{prop}[thm]{Proposition}
 \theoremstyle{definition}
 \newtheorem{defn}[thm]{Definition}
 \theoremstyle{remark}
 \newtheorem{rem}[thm]{Remark}
 \newtheorem*{ex}{Example}
 \numberwithin{equation}{section}
\begin{document}
	
\title{\bf Some remarks concerning invariant quantities in scalar-tensor gravity}
\author{Ott Vilson\thanks{ovilson@ut.ee} \\
{\normalsize Institute of Physics, University of Tartu,} \\ {\normalsize Ravila 14c, Tartu 50411, Estonia}}	

\date{}
\maketitle

\begin{abstract}
The aim of the current paper is to clarify some aspects of the formalism used for describing the scalar-tensor gravity characterized by four arbitrary local functionals of the scalar field. We recall the objects that are invariant with respect to a spacetime point under the local Weyl rescaling of the metric and under the scalar field redefinition. We phrase and prove a theorem that allows to link such an object to each quantity in a theory where two out of the four arbitrary local functionals of the scalar field are specified in a suitable manner. Based on these results we phrase and reason the existence of the so called translation rules. 

\vspace{0.5cm}

\noindent
{\bf Mathematics Subject Classification (2010).} Primary 83D05; Secondary 53Z99  \\
{\bf Keywords.} Invariants, scalar-tensor theory of gravity
\end{abstract}

\maketitle
\section{Introduction}
The history of scalar-tensor theories of gravity (STG) is long, starting with the works of Jordan \cite{jordan} and Fierz \cite{Fierz}, later developed by Brans and Dicke \cite{BD}, \cite{Dicke}. The original idea was purely theoretical since there were no observational contradictions to Einstein's general relativity (GR). In about a decade ago astronomers claimed that the Universe is expanding in an accelerating manner and explained that in the context of GR with a nonvanishing cosmological constant. This needs finetuning which we would like to avoid in a fundamental theory. Due to the latter studying the extensions of GR, STG being one of them, is still popular. 

The aim of the current paper is to clarify some mathematical issues concerning the invariant quantities in general STG and the so called translation rules that were proposed in our recent paper \cite{JKSV_2}. A more detailed introduction and references to the literature on that subject can also be found there.

The outline of the paper is the following. In Section~\ref{parametrizations} we recall the general framework for STG mostly relying on the paper by Flanagan \cite{Flanagan}. Section~\ref{Invariants} summarizes the results of Ref.~\cite{JKSV_2} that will be used in the current paper. In Section~\ref{invariants_and_parametrizations} we phrase and prove a lemma and a theorem claiming the existence of the so called invariant pair. In Section~\ref{translation_rules} we point out an important corollary of the latter. Based on these results we formulate and reason the existence of the so called translation rules proposed in Ref.~\cite{JKSV_2}.  

\section{Parametrizations in scalar-tensor theories of gravity}\label{parametrizations}

In a scalar-tensor theory of gravity the gravitational interaction is characterized by a metric tensor $g_{\mu\nu}(x^\mu)$ of a curved spacetime $x^\mu \in V_4$ and a scalar field $\Phi(x^\mu)$. In the current paper we consider a family of scalar-tensor theories of gravity by postulating a general action functional \cite{Flanagan}
\begin{align}
\nonumber
S &= \frac{1}{2\kappa^2}\int_{V_4}d^4x\sqrt{-g}\left\lbrace {\mathcal A}(\Phi)R-
{\mathcal B}(\Phi)g^{\mu\nu}\nabla_\mu\Phi \nabla_\nu\Phi - 2\ell^{-2}{\mathcal V}(\Phi)\right\rbrace \\
\label{fl_moju}
&\quad\quad+ S_m\left[e^{2\alpha(\Phi)}g_{\mu\nu},\chi\right] \,
\end{align}
which contains four arbitrary local functionals $\left\lbrace \mathcal{A}(\Phi),\, \mathcal{B}(\Phi),\, \mathcal{V}(\Phi),\, \alpha(\Phi) \right\rbrace$ of the dimensionless scalar field $\Phi(x^\mu)$. Out of the four the local functional $\mathcal{A}(\Phi)$ is multiplied by the Ricci scalar $R$ and occasionally the term `curvature coupling' is used to refer to $\mathcal{A}(\Phi)$. Analogically `kinetic coupling' refers to $\mathcal{B}(\Phi)$, i.e.\ to the multiplier of the kinetic term for the scalar field $\Phi(x^\mu)$. The local functional $\mathcal{V}(\Phi)$ is known as the scalar field potential and from the particle physics viewpoint it contains the scalar field self-interactions. For a general case the matter action functional $S_m$ depends on the metric tensor $g_{\mu\nu}$ via conformal coupling $e^{2\alpha(\Phi)}$, i.e.\ the spacetime indexes in the Lagrangian for the matter fields, collectively denoted as $\chi$, are contracted by $e^{2\alpha(\Phi)}g_{\mu\nu}$ and its inverse. The term `matter coupling' is frequently used to refer to $\alpha(\Phi)$. Due to suitably chosen dimensionful constants $\kappa^2$ and $\ell^{-2}$ the four arbitrary local functionals $\left\lbrace \mathcal{A}(\Phi),\, \mathcal{B}(\Phi),\, \mathcal{V}(\Phi),\, \alpha(\Phi) \right\rbrace$ are dimensionless and if the functional form w.r.t.\ $\Phi(x^\mu)$ of each of them is fixed then the theory is fixed. Let us point out that all local functionals of $\Phi(x^\mu)$ inherit a dependence on $x^\mu$ and hence are functions of a spacetime point as well.

\begin{prop}\label{prop:fl_fn_teisenemine}
If under the local Weyl rescaling of the metric tensor and under the scalar field redefinition
\begin{align}
	\label{conformal_transformation}
	g_{\mu\nu} &= e^{2\bar{\gamma}(\bar{\Phi})}\bar{g}_{\mu\nu} \,,	 \\
	\label{field_redefinition}
	\Phi &= \bar{f}(\bar{\Phi}) \,
\end{align}
the four arbitrary local functionals are imposed to transform as
\begin{subequations}
	\label{fl_fn_teisendused}
\begin{align}
	\label{A_transformation}
		{\mathcal A} \left( {\bar f}( {\bar \Phi})\right) &= e^{-2\bar{\gamma}(\bar{\Phi})} \bar{\mathcal{A}}(\bar{\Phi})
		 \,,\\
	\label{B_transformation}
		{\mathcal B}\left(\bar{f}(\bar{\Phi})\right) &= e^{-2{\bar \gamma}({\bar \Phi})} \left(\bar{f}^\prime\right)^{-2} \left( {\bar {\mathcal B}}({\bar \Phi}) - 6\left(\bar{\gamma}^{\,\prime}\right)^2 \bar{\mathcal{A}} \left(\bar{\Phi} \right) +
		6\bar{\gamma}^{\,\prime} \bar{\mathcal{A}}^\prime \right) \,, \\
	\label{V_transformation}
		{\mathcal V}\left(\bar{f}(\bar{\Phi})\right) &= e^{-4\bar{\gamma}(\bar{\Phi})} \, \bar{{\mathcal V}}(\bar{\Phi}) \,, \\
	\label{alpha_transformation}
		\alpha\left(\bar{f}(\bar{\Phi})\right) &= \bar{\alpha}(\bar{\Phi}) - \bar{\gamma}(\bar{\Phi})\,
\end{align}
\end{subequations}
then the action functional \eqref{fl_moju} is invariant under the transformations \eqref{conformal_transformation}-\eqref{field_redefinition} up to a boundary term \cite{Flanagan}. 
\end{prop}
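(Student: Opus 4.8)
The plan is to substitute the transformations \eqref{conformal_transformation}--\eqref{field_redefinition} together with the assumed transformation rules \eqref{fl_fn_teisendused} directly into the action \eqref{fl_moju}, and to show term by term that the integrand reduces to the very same expression written in the barred quantities, the only mismatch being a total divergence. First I would record the elementary consequences of the Weyl rescaling \eqref{conformal_transformation} in four dimensions, namely $\sqrt{-g} = e^{4\bar\gamma}\sqrt{-\bar g}$ and $g^{\mu\nu} = e^{-2\bar\gamma}\bar g^{\mu\nu}$, together with the standard conformal transformation law for the Ricci scalar,
\begin{equation*}
R = e^{-2\bar\gamma}\left( \bar R - 6\,\bar\nabla^\mu\bar\nabla_\mu\bar\gamma - 6\,\bar g^{\mu\nu}\bar\nabla_\mu\bar\gamma\,\bar\nabla_\nu\bar\gamma \right),
\end{equation*}
where $\bar\nabla$ denotes the Levi-Civita connection of $\bar g_{\mu\nu}$. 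Since $\bar\gamma$ and $\bar f$ depend on $\bar\Phi$ only, the chain rule gives $\bar\nabla_\mu\bar\gamma = \bar\gamma'\,\bar\nabla_\mu\bar\Phi$ and $\nabla_\mu\Phi = \bar f'\,\bar\nabla_\mu\bar\Phi$, which is what lets every field gradient be re-expressed through $\bar\nabla_\mu\bar\Phi$.

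Next I would treat the curvature term. Inserting the relations above together with rule \eqref{A_transformation} collapses the prefactors $e^{4\bar\gamma}\cdot e^{-2\bar\gamma}\cdot e^{-2\bar\gamma}$ to unity, leaving the integrand $\sqrt{-\bar g}\,\bar{\mathcal{A}}\,(\bar R - 6\,\bar\nabla^\mu\bar\nabla_\mu\bar\gamma - 6(\bar\gamma')^2\,\bar g^{\mu\nu}\bar\nabla_\mu\bar\Phi\,\bar\nabla_\nu\bar\Phi)$. The $\bar\nabla^\mu\bar\nabla_\mu\bar\gamma$ piece is the only one not already of the desired form, so I would integrate it by parts: $-6\int\sqrt{-\bar g}\,\bar{\mathcal{A}}\,\bar\nabla^\mu\bar\nabla_\mu\bar\gamma$ produces a boundary term plus $6\int\sqrt{-\bar g}\,\bar{\mathcal{A}}'\bar\gamma'\,\bar g^{\mu\nu}\bar\nabla_\mu\bar\Phi\,\bar\nabla_\nu\bar\Phi$. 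The curvature term thus contributes $\sqrt{-\bar g}\,\bar{\mathcal{A}}\bar R$, a boundary term, and an extra kinetic-type piece whose coefficient is $6\bar{\mathcal{A}}'\bar\gamma' - 6\bar{\mathcal{A}}(\bar\gamma')^2$.

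The decisive step is then to combine this with the transformed kinetic term. Using $\sqrt{-g}\,g^{\mu\nu} = e^{2\bar\gamma}\sqrt{-\bar g}\,\bar g^{\mu\nu}$, the relation $\nabla_\mu\Phi = \bar f'\,\bar\nabla_\mu\bar\Phi$, and rule \eqref{B_transformation}, the exponential and $(\bar f')^{\pm 2}$ factors cancel and the kinetic term becomes $-\sqrt{-\bar g}\,(\bar{\mathcal{B}} - 6(\bar\gamma')^2\bar{\mathcal{A}} + 6\bar\gamma'\bar{\mathcal{A}}')\,\bar g^{\mu\nu}\bar\nabla_\mu\bar\Phi\,\bar\nabla_\nu\bar\Phi$. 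Adding the extra piece inherited from the curvature term, the $6\bar{\mathcal{A}}'\bar\gamma'$ and the $6\bar{\mathcal{A}}(\bar\gamma')^2$ contributions cancel exactly, leaving precisely $-\sqrt{-\bar g}\,\bar{\mathcal{B}}\,\bar g^{\mu\nu}\bar\nabla_\mu\bar\Phi\,\bar\nabla_\nu\bar\Phi$. This cancellation is the heart of the matter, and it is exactly what the particular structure of \eqref{B_transformation}, carrying the $-6(\bar\gamma')^2\bar{\mathcal{A}}$ and $+6\bar\gamma'\bar{\mathcal{A}}'$ terms, is engineered to achieve. The remaining two terms are immediate: rule \eqref{V_transformation} together with $\sqrt{-g} = e^{4\bar\gamma}\sqrt{-\bar g}$ turns the potential term into $-2\ell^{-2}\sqrt{-\bar g}\,\bar{\mathcal{V}}$, while rule \eqref{alpha_transformation} combined with \eqref{conformal_transformation} gives $e^{2\alpha(\Phi)}g_{\mu\nu} = e^{2(\bar\alpha-\bar\gamma)}e^{2\bar\gamma}\bar g_{\mu\nu} = e^{2\bar\alpha}\bar g_{\mu\nu}$, so that $S_m$ is literally unchanged in form. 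Collecting everything reproduces \eqref{fl_moju} in the barred variables up to the single boundary term, as claimed.

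I would expect the only genuine obstacle to be the careful bookkeeping of the conformal transformation of $R$ and the subsequent integration by parts: getting the numerical coefficient $6=(n-1)(n-2)\big|_{n=4}$ and the sign of the $\bar\nabla^\mu\bar\nabla_\mu\bar\gamma$ term right is what makes the cancellation against \eqref{B_transformation} close, and a slip there would spoil the whole identity. Everything else reduces to the algebraic cancellation of the exponential and $\bar f'$ prefactors, which is routine.
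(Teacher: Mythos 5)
Your proof is correct and follows exactly the route the paper indicates but does not spell out (it defers the computation to Flanagan): rewrite the action in the barred variables, use the conformal transformation of the Ricci scalar, and integrate by parts so that the $\bar\nabla^\mu\bar\nabla_\mu\bar\gamma$ term turns into the kinetic-type piece that cancels the $-6(\bar\gamma^{\,\prime})^2\bar{\mathcal{A}}+6\bar\gamma^{\,\prime}\bar{\mathcal{A}}^\prime$ structure built into \eqref{B_transformation}, leaving only the boundary term. All your cancellations (the exponential and $\bar f^\prime$ prefactors, the potential and matter-sector terms) check out, so your write-up is a valid filling-in of the details the paper omits.
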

\noindent Here and in the following we shall drop the arguments of the functionals unless confusion might arise. Let us also adopt a notation where prime as a superscript of a ``barred" local functional of the scalar field means variational derivative w.r.t.\ the ``barred" scalar field $\bar{\Phi}(x^\mu)$ and prime as a superscript of such a quantity without ``bar" means variational derivative w.r.t.\ the ``unbarred" scalar field $\Phi(x^\mu)$, e.g.\ $\bar{f}^\prime  \equiv \displaystyle{\frac{\delta \bar{f}(\bar{\Phi})}{\delta \bar{\Phi}}}$ and $\mathcal{A}^\prime \equiv \displaystyle{\frac{\delta \mathcal{A}(\Phi)}{\delta \Phi}}$ respectively. Note that due to the inherited dependence on a spacetime point one can differentiate functionals of $\Phi$ w.r.t.\ $x^\mu$ via ordinary partial derivatives.

The relations \eqref{fl_fn_teisendused} are obtained by rewriting the action functional \eqref{fl_moju} using $\bar{g}_{\mu\nu}$ and $\bar{\Phi}$ as dynamical fields. In the current paper we assume the affine connection to be the Levi-Civita one. Due to the latter such a rewriting of the action functional \eqref{fl_moju} also introduces a boundary term but here and in the following we shall drop boundary terms. We also assume the premiss of Proposition \ref{prop:fl_fn_teisenemine} to hold and whenever Eqs.~\eqref{conformal_transformation}-\eqref{field_redefinition} are recalled also Eqs.~\eqref{fl_fn_teisendused} are taken into account.
\begin{defn}[parametrization]\label{def:parametrization}
If the functional form w.r.t.\ $\Phi$ of exactly two out of the four arbitrary local functionals $\left\lbrace \mathcal{A},\, \mathcal{B},\, \mathcal{V},\, \alpha \right\rbrace$ is fixed then we say that the theory is given in a specific frame and parametrization.
\end{defn}
\noindent The term `reparametrization' refers to the scalar field redefinition \eqref{field_redefinition} while the Weyl rescaling \eqref{conformal_transformation} is the change of the `frame'. Roughly speaking both of these transformations can be used to fix the functional form of one arbitrary local functional out of the four. A closer look on the transformation properties \eqref{fl_fn_teisendused} reveals that all four arbitrary local functionals transform under the Weyl rescaling \eqref{conformal_transformation} but it might be the case that not all of them transform under the scalar field redefinition \eqref{field_redefinition} (e.g.\ $\mathcal{A}=1$). Therefore it is convenient to think that first the frame is chosen, i.e.\ we specify the metric tensor, and then the parametrization is chosen. In that sense the latter involves the former and in the following an explicit reference to the chosen frame is suppressed.
\begin{ex}
	The Jordan frame Brans-Dicke-Bergmann-Wagoner parametri\-zation (JF BDBW) with the scalar field denoted as $\Psi$ is given by \cite{BD}, \cite{Faraoni}, \cite{Maeda}: 
	\begin{equation}
	\label{Jordan_frame}
	\mathcal{A} \equiv \Psi \quad ,\quad \mathcal{B} \equiv \frac{\omega(\Psi)}{\Psi} \quad ,\quad \mathcal{V} \equiv \mathcal{V}_{\mathfrak{J}}(\Psi) \quad ,\quad \alpha \equiv 0 \,. 
	\end{equation}
	The Einstein frame canonical parametrization (EF canonical) with the scalar field denoted as $\varphi$ is given by \cite{Dicke}, \cite{Faraoni}, \cite{Maeda}:
	\begin{equation}
	\label{Einstein_frame}
	\mathcal{A} \equiv 1 \quad ,\quad \mathcal{B} \equiv 2 \quad ,\quad \mathcal{V} \equiv \mathcal{V}_{\mathfrak{E}}(\varphi) \quad ,\quad \alpha \equiv \alpha_{\mathfrak{E}}(\varphi) \,.
	\end{equation}
\end{ex}
A parametrization is in principle meaningful without considering the Weyl rescaling \eqref{conformal_transformation} and the scalar field redefinition \eqref{field_redefinition} at all but nevertheless in a generic case these transformations can be used to transform an arbitrary set of functionals $\left\lbrace \mathcal{A}(\Phi),\, \mathcal{B}(\Phi),\, \mathcal{V}(\Phi),\, \alpha(\Phi) \right\rbrace$ into e.g.\ JF BDBW parametrization \eqref{Jordan_frame}. Hence a chosen parametrization is not a unique description of a theory.
\begin{ex}
	In order to transform from JF BDBW parametrization \eqref{Jordan_frame} to EF canonical parametrization \eqref{Einstein_frame} we consider the relations
	\begin{equation}
	\label{EF_to_JF}
	e^{2\bar{\gamma}(\varphi)} = e^{2\alpha_{\mathfrak{E}}(\varphi)} \,,\quad \left(\frac{\delta \Psi}{\delta \varphi}\right)^2 = 4e^{-4\alpha_{\mathfrak{E}}(\varphi)}\left( \frac{\delta \alpha_{\mathfrak{E}}(\varphi)}{\delta \varphi} \right)^2 \rightarrow \Psi = \Psi(\varphi) 
	\end{equation}
	in the case when EF canonical parametrization quantities are considered to be the ``barred" ones. For the reverse transformation we choose
	\begin{equation}
	\label{JF_to_EF}
	e^{2\bar{\gamma}(\Psi)} = \Psi \,,\quad \left(\frac{\delta \varphi}{\delta \Psi}\right)^2 = \frac{2\omega(\Psi) + 3}{4\Psi^2} \rightarrow \varphi \equiv \varphi(\Psi) 
	\end{equation}
	if instead JF BDBW parametrization quantities are considered to be the ``barred" ones \cite{Faraoni}.
\end{ex} 

\section{Invariants}\label{Invariants}

Let us recall three basic objects introduced in our recent paper \cite{JKSV_2}
\begin{align}\label{I_1}
	\mathcal{I}_1(\Phi) &\equiv \frac{e^{2\alpha(\Phi)}}{\mathcal{A}(\Phi)} \,,\qquad 
	\mathcal{I}_2(\Phi) \equiv \frac{\mathcal{V}(\Phi)}{\mathcal{A}(\Phi)^2} \,,\, \\
	\label{I_3}
	\mathcal{I}_3(\Phi) &\equiv
	\pm \bigints \sqrt{ \frac{ 2\mathcal{A}(\Phi)\mathcal{B}(\Phi) + 3 \left( \mathcal{A}^\prime(\Phi)\right)^2}{4\mathcal{A}(\Phi)^2}} \delta \Phi \,.
\end{align}
In Eq.~\eqref{I_3} the integrand is a local functional of $\Phi$ but, as there is no dependence on the derivatives of $\Phi$, for such a case $\delta \Phi$ coincides with $d\Phi$ and the expression under consideration is in principle an ordinary indefinite integral.

Eqs.~\eqref{I_1}-\eqref{I_3} define functions of a spacetime point through three compositional steps: 
\begin{itemize}		
	\item[i)  ] $\mathcal{I}_i \equiv \mathcal{I}_i(\left\lbrace \mathcal{A},\,\mathcal{B},\,\mathcal{V},\,\alpha  \right\rbrace)$, e.g.\ $\mathcal{I}_1 \equiv \mathcal{I}_1(\mathcal{A},\,\alpha) \equiv \frac{e^{2\alpha}}{\mathcal{A}}$. \\
	The structure of $\mathcal{I}_i$ w.r.t.\ $\left\lbrace \mathcal{A},\,\mathcal{B},\,\mathcal{V},\,\alpha  \right\rbrace$ is preserved under the Weyl rescaling of the metric tensor \eqref{conformal_transformation} and the scalar field redefinition \eqref{field_redefinition}.
		
	\item[ii) ] $\mathcal{I}_i \equiv \mathcal{I}_i(\Phi) \Leftarrow \mathcal{A} \equiv \mathcal{A}(\Phi) \text{ etc}$. \\
	Under the Weyl rescaling $\mathcal{I}_i$ preserves its functional form w.r.t.\ the scalar field $\Phi$, i.e.\ $\bar{\mathcal{I}}_i(\bar{\Phi}) \equiv \mathcal{I}_i(\Phi\equiv \bar{\Phi})$. If also the scalar field $\Phi$ is redefined then $ \bar{\mathcal{I}}_i (\bar{\Phi}) \equiv \left( \mathcal{I}_i \circ \bar{f} \,\right) (\bar{\Phi})$.	
		
	\item[iii)] $\mathcal{I}_i \equiv \mathcal{I}_i(x^\mu) \Leftarrow \Phi \equiv \Phi(x^\mu)$.\\
	$\mathcal{I}_i$ is an invariant w.r.t.\ a spacetime point $x^\mu \in V_4$ which follows from the fact that under the transformations \eqref{conformal_transformation}-\eqref{field_redefinition} the numerical value of the four arbitrary local functionals at a spacetime point changes due to multiplicative and additive terms in Eqs.~\eqref{fl_fn_teisendused}. For $\mathcal{I}_i$ the extra terms and factors cancel out and hence the numerical value of $\mathcal{I}_i$ at a spacetime point is preserved under the transformations \eqref{conformal_transformation}-\eqref{field_redefinition}. In the same spirit we conclude that $\partial_\mu \mathcal{I}_i$ is also an invariant w.r.t.\ $x^\mu$.
\end{itemize}

\begin{cor}\label{cor1}
	One may define arbitrarily many quantities having the same transformation properties as $\mathcal{I}_1$ etc. via three procedures
	\begin{itemize}
		\item[i)  ] Introducing an arbitrary functional $h$
		\begin{equation}
		\mathcal{I}_i \equiv h\left( \left\lbrace \mathcal{I}_j \right\rbrace_{j\in \mathscr{J}} \right)
		\end{equation}
		where $\mathscr{J}$ is some set of indices.
		\item[ii) ] Introducing a quotient of derivatives
		\begin{equation}
		\mathcal{I}_j \equiv \frac{\mathcal{I}_i^\prime}{\mathcal{I}_k^\prime} \equiv ^{ \displaystyle{\frac{\delta \mathcal{I}_i}{\delta \Phi}} }\negmedspace \bigg/ \negmedspace _{ \displaystyle{\frac{\delta \mathcal{I}_k}{\delta \Phi}} } = \frac{\delta \mathcal{I}_i}{\delta \mathcal{I}_k}\,.
		\end{equation} 
		\item[iii) ] Integrating over the scalar field $\Phi$
		\begin{equation}
		\mathcal{I}_i \equiv \int \mathcal{I}_j \mathcal{I}_k^\prime \delta \Phi \,
		\end{equation}
		in the sense of an indefinite integral.
	\end{itemize}	
\end{cor}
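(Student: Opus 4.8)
The plan is to reduce all three claims to the single transformation law recorded in step~(ii) preceding the corollary, namely that a quantity $\mathcal{I}$ is an invariant precisely when it carries its functional form across the transformations \eqref{conformal_transformation}--\eqref{field_redefinition} as $\bar{\mathcal{I}}(\bar\Phi) = (\mathcal{I}\circ \bar f)(\bar\Phi)$, together with its differential consequence $\bar{\mathcal{I}}^\prime(\bar\Phi) = \mathcal{I}^\prime(\bar f(\bar\Phi))\,\bar f^\prime(\bar\Phi)$ obtained by the chain rule. For each of the three constructions it then suffices to verify that the newly defined object obeys the same law, because this law is exactly what guarantees, once $\Phi\equiv\Phi(x^\mu)$ is inserted, that the numerical value at a spacetime point $x^\mu$ is preserved.

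For procedure~(i) the verification is immediate. Assuming $\bar{\mathcal{I}}_j(\bar\Phi) = \mathcal{I}_j(\bar f(\bar\Phi))$ for every $j\in\mathscr{J}$, and that the functional $h$ is a fixed rule applied with the same form in both frames, one has $\bar{\mathcal{I}}_i(\bar\Phi) = h(\{\bar{\mathcal{I}}_j(\bar\Phi)\}) = h(\{\mathcal{I}_j(\bar f(\bar\Phi))\}) = \mathcal{I}_i(\bar f(\bar\Phi))$, so that $\mathcal{I}_i\equiv h(\{\mathcal{I}_j\})$ is again an invariant. For procedure~(ii) I would differentiate the defining law for $\mathcal{I}_i$ and $\mathcal{I}_k$ with respect to $\bar\Phi$, producing the common factor $\bar f^\prime$ in numerator and denominator; forming the quotient cancels $\bar f^\prime$ and yields $\bar{\mathcal{I}}_i^\prime/\bar{\mathcal{I}}_k^\prime = \mathcal{I}_i^\prime(\bar f(\bar\Phi))/\mathcal{I}_k^\prime(\bar f(\bar\Phi)) = \mathcal{I}_j(\bar f(\bar\Phi))$, which is precisely the transformation law for $\mathcal{I}_j\equiv \mathcal{I}_i^\prime/\mathcal{I}_k^\prime$. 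The same cancellation is what licenses reading this quotient as the frame-independent derivative $\delta\mathcal{I}_i/\delta\mathcal{I}_k$.

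For procedure~(iii) I would first observe that $\mathcal{I}_k^\prime\,\delta\Phi = \delta\mathcal{I}_k$, so the integrand is the one-form $\mathcal{I}_j\,\delta\mathcal{I}_k$, and that this one-form is invariant: both $\mathcal{I}_j$ and $\delta\mathcal{I}_k$ carry their values across unchanged, since the chain-rule factor $\bar f^\prime$ arising in $\delta\mathcal{I}_k$ combines with $\delta\bar\Phi$ to reproduce $\delta\Phi$ under the change of variable $\Phi=\bar f(\bar\Phi)$. Integrating an invariant one-form then returns an invariant, exactly as $\mathcal{I}_3$ in \eqref{I_3} was itself constructed. The main subtlety I anticipate lies here: the indefinite integral is fixed only up to a constant of integration, so invariance must be understood either modulo that constant or, unambiguously, at the level of $\partial_\mu\mathcal{I}_i$, which the excerpt already notes is manifestly invariant. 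A secondary point to flag is the nondegeneracy hypotheses $\bar f^\prime\neq 0$ and $\mathcal{I}_k^\prime\neq 0$, which are needed for the cancellations in~(ii) and for the change of variable in~(iii) to be legitimate.
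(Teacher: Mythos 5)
Your proof is correct and follows essentially the same route the paper takes implicitly: the paper states this corollary without a separate proof, treating it as an immediate consequence of the transformation law $\bar{\mathcal{I}}_i(\bar{\Phi}) \equiv \left(\mathcal{I}_i \circ \bar{f}\,\right)(\bar{\Phi})$ recorded just before it, and your chain-rule verifications of procedures (i)--(iii) simply spell out that reasoning. Your added caveats (the integration constant in (iii) and the nondegeneracy conditions $\bar{f}^\prime \neq 0$, $\mathcal{I}_k^\prime \neq 0$) are sensible refinements that the paper itself glosses over.
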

\noindent We shall refer to such quantities as invariants.

\begin{ex}
	\begin{equation}\label{I_4}
	\mathcal{I}_4(\Phi) \equiv \frac{\mathcal{I}_2(\Phi)}{\mathcal{I}_1(\Phi)^2} \,,\qquad
	\mathcal{I}_5(\Phi) \equiv \left( \frac{ \mathcal{I}_1^\prime(\Phi) }{ 2\,\mathcal{I}_1(\Phi)\,\mathcal{I}_3^\prime(\Phi) } \right)^2 \,.
	\end{equation}	
\end{ex}

Let us introduce an `invariant metric' as
\begin{equation}\label{invariant_metric}
	\hat{g}^{(\cdot)}_{\mu\nu} \equiv \mathcal{I}_i \mathcal{A} g_{\mu\nu} \,.
\end{equation}
Here the precise definition depends on the choice of $\mathcal{I}_i$ and we shall distinguish between different invariant metrics by using some superscript $(\cdot)$. By Eq.~\eqref{invariant_metric} we have defined an object which under the Weyl rescaling of the metric tensor \eqref{conformal_transformation} and under the scalar field redefinition \eqref{field_redefinition} transforms as $\mathcal{I}_1$ etc. due to suitable transformation properties of $\mathcal{A}$ given by \eqref{A_transformation}. Nevertheless it is a metric tensor, e.g.\ it can be used to raise and lower spacetime indices.

We define the Levi-Civita connection with respect to $\hat{g}_{\mu\nu}^{(\cdot)}$ as
	\begin{align}
		\nonumber
		\hat{\Gamma}^{\sigma}_{\mu\nu} \equiv \Gamma^{\sigma}_{\mu\nu} + \frac{\mathcal{A}^\prime}{2\mathcal{A}}\left( \delta^\sigma_\mu \partial_\nu \Phi + \delta^\sigma_\nu \partial_\mu \Phi - g_{\mu\nu}g^{\sigma\rho} \partial_\rho\Phi \right) + \\
		\label{invariant_connection}
		+ \frac{1}{2\,\mathcal{I}_i} \left( \delta^\sigma_\mu \partial_\nu \mathcal{I}_i + \delta^\sigma_\nu \partial_\mu \mathcal{I}_i - g_{\mu\nu}g^{\sigma\rho} \partial_\rho\mathcal{I}_i \right)
	\end{align}
where $\Gamma^\sigma_{\mu\nu}$ are the Levi-Civita connection coefficients for the metric $g_{\mu\nu}$. 

\begin{rem}\label{rem3}
	The definition \eqref{invariant_connection} is in a sense identical to the well known transformation rule of the Levi-Civita connection coefficients under the Weyl rescaling of the metric tensor $g_{\mu\nu}$ \cite{Nakahara} but here the key idea is that we introduce additional terms to cancel the effect of the Weyl rescaling on $\Gamma^{\sigma}_{\mu\nu}$.
\end{rem}

The definitions \eqref{invariant_metric} and \eqref{invariant_connection} can be used to construct geometrical objects, such as $\hat{R}^{(\cdot)}$, that are invariant under the Weyl rescaling of the metric tensor \eqref{conformal_transformation}.

\section{Invariants and parametrizations}\label{invariants_and_parametrizations}

In what follows we shall work with three formulations of STG:
\begin{itemize}
	\item[i) ] The generic case described by the action functional \eqref{fl_moju} where non of the four arbitrary local functionals $\left\lbrace \mathcal{A},\,\mathcal{B},\,\mathcal{V},\,\alpha  \right\rbrace$ of $\Phi$ have gained a fixed functional form. We denote these variables as denoted in \eqref{fl_moju}, i.e.\
	\begin{equation}
		\label{general_case}
		g_{\mu\nu}\,,\,\, \Phi \,,\,\, \text{etc.}
	\end{equation}
	\item[ii) ] An arbitrary parametrization $\mathfrak{P}$, see Definition \ref{def:parametrization}, where we shall add a superscript $\mathfrak{P}$ to the metric tensor and a subscript $\mathfrak{P}$ to all other objects as
	\begin{equation}
		\label{parametrization_case}
		g^{\mathfrak{P}}_{\mu\nu}\,,\,\, \Phi_{\mathfrak{P}}\,,\,\, \mathcal{A}_{\mathfrak{P}} \equiv \mathcal{A}_{\mathfrak{P}}(\Phi_\mathfrak{P})\,,\,\, \text{etc.}
	\end{equation}
	\item[iii) ]
	The invariant case determined by a parametrization $\mathfrak{P}$. There we use an invariant metric \eqref{invariant_metric} and other invariants
	\begin{equation}
		\label{invariant_case}
		\hat{g}^{(\mathfrak{P})}_{\mu\nu} \,,\,\, \mathcal{I}^{(\mathfrak{P})}(\Phi) \,,\,\,\text{etc.}
	\end{equation}
	Here $\mathfrak{P}$ as a superscript in parentheses emphasizes that the quantity under consideration is determined by the parametrization $\mathfrak{P}$ but does not have to be evaluated in that parametrization. It could be calculated in any other parametrization or instead considered in the generic case. What it means to be determined by a parametrization $\mathfrak{P}$ will be clarified in the following pages.
\end{itemize}

\noindent There are six possibilities to fix two out of the four arbitrary functionals $\left\lbrace \mathcal{A},\,\mathcal{B},\,\mathcal{V},\,\alpha  \right\rbrace$, i.e.\ to choose a parametrization. For four possibilities out of the six a quick glimpse on \eqref{fl_fn_teisendused} reveals that also one invariant gains a fixed functional form. Namely
	\begin{itemize}
		\item[i) ] $\mathcal{A}$ and $\alpha$ are fixed:\,\,
		$\mathcal{I}_1(\Phi_\mathfrak{P}) \equiv \frac{e^{2\alpha_\mathfrak{P}}}{\mathcal{A}_\mathfrak{P}}$,
		\item[ii) ] $\mathcal{A}$ and $\mathcal{V}$ are fixed:\,\,
		$\mathcal{I}_2(\Phi_\mathfrak{P})\equiv \frac{\mathcal{V}_\mathfrak{P}}{\mathcal{A}_\mathfrak{P}^2}$,
		\item[iii) ] $\mathcal{A}$ and $\mathcal{B}$ are fixed:\,\,
		$\mathcal{I}_3(\Phi_\mathfrak{P}) \equiv \pm \bigints \sqrt{\frac{2\mathcal{A}_\mathfrak{P}\mathcal{B}_\mathfrak{P} + 3\left(\mathcal{A}_\mathfrak{P}^\prime\right)^2}{4\mathcal{A}_\mathfrak{P}^2}} \delta \Phi_\mathfrak{P}$,
		\item[iv) ] $\mathcal{V}$ and $\alpha$ are fixed:\,\,
		$\mathcal{I}_4(\Phi_\mathfrak{P}) \equiv \frac{\mathcal{V}_\mathfrak{P}}{e^{4\alpha_\mathfrak{P}}}$.
	\end{itemize}
The case where $\mathcal{B}$ and $\mathcal{\alpha}$ (analogically $\mathcal{B}$ and $\mathcal{V}$) have a fixed functional form is more complicated: the corresponding invariant (if it exists) depends on the exact functional form of $\mathcal{B}$ and $\alpha$ and is not the same for all possible choices. For an example see JF BEPS in \cite{JKSV_2}.

\begin{lem}\label{lem:gamma}
	Let us assume that in a parametrization $\mathfrak{P}$ an invariant $\mathcal{I}_{fix} (\Phi_{\mathfrak{P}})$ has gained a fixed functional form. If $\mathcal{I}_{fix}$ is a nonconstant local functional then there exists a functional $\mathcal{K}^{(\mathfrak{P})}(\Phi)$ which in the parametrization $\mathfrak{P}$ is equal to $1$ and in the generic case transforms as $\mathcal{A}(\Phi)$, i.e.\ according to \eqref{A_transformation}.
\end{lem}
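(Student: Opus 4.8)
The plan is to manufacture an invariant $\mathcal{I}_A^{(\mathfrak{P})}$ whose value, when the theory is read off in the parametrization $\mathfrak{P}$, coincides with the curvature coupling $\mathcal{A}_{\mathfrak{P}}$, and then to set $\mathcal{K}^{(\mathfrak{P})} \equiv \mathcal{A}/\mathcal{I}_A^{(\mathfrak{P})}$. Because $\mathcal{A}$ carries the factor $e^{-2\bar\gamma}$ under \eqref{conformal_transformation}--\eqref{field_redefinition} while an invariant in the denominator does not, the quotient should automatically transform as $\mathcal{A}$; and since in $\mathfrak{P}$ the denominator reduces to $\mathcal{A}_{\mathfrak{P}}$, the quotient reduces to $1$ there. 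The two requirements of the lemma thus become two short checks once $\mathcal{I}_A^{(\mathfrak{P})}$ has been built.

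To build $\mathcal{I}_A^{(\mathfrak{P})}$ I first note that in the parametrization $\mathfrak{P}$ the fixed invariant is a concrete function $\mathcal{I}_{fix}(\Phi_{\mathfrak{P}})$ of the scalar field $\Phi_{\mathfrak{P}}$. The hypothesis that $\mathcal{I}_{fix}$ is nonconstant is exactly what lets me invert this relation, at least locally, to write $\Phi_{\mathfrak{P}} = \Phi_{\mathfrak{P}}(\mathcal{I}_{fix})$. Substituting into the fixed curvature coupling $\mathcal{A}_{\mathfrak{P}}(\Phi_{\mathfrak{P}})$ expresses it as a function of $\mathcal{I}_{fix}$ alone; I take this composite function $h$ and set $\mathcal{I}_A^{(\mathfrak{P})} \equiv h(\mathcal{I}_{fix})$. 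By Corollary \ref{cor1}, item i), a functional of the invariant $\mathcal{I}_{fix}$ is again an invariant, so $\mathcal{I}_A^{(\mathfrak{P})}$ has the transformation behaviour of $\mathcal{I}_1$, i.e.\ its numerical value at each spacetime point is untouched by \eqref{conformal_transformation}--\eqref{field_redefinition}. This is also the point where being \emph{determined by} $\mathfrak{P}$ enters: the shape of $h$ is dictated by the parametrization $\mathfrak{P}$, but $\mathcal{I}_A^{(\mathfrak{P})}$ is then a perfectly good invariant that may be evaluated in any frame.

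With $\mathcal{I}_A^{(\mathfrak{P})}$ in hand I would verify the two claims. For the transformation rule I use $\bar{\mathcal{A}} = e^{2\bar\gamma}\,\mathcal{A}\circ\bar f$ from \eqref{A_transformation} together with the invariance $\bar{\mathcal{I}}_A^{(\mathfrak{P})} = \mathcal{I}_A^{(\mathfrak{P})}\circ\bar f$; dividing gives $\bar{\mathcal{K}}^{(\mathfrak{P})} = e^{2\bar\gamma}\,\mathcal{K}^{(\mathfrak{P})}\circ\bar f$, which is precisely \eqref{A_transformation} with $\mathcal{K}^{(\mathfrak{P})}$ in place of $\mathcal{A}$. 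For the normalization I evaluate everything in $\mathfrak{P}$: there $\mathcal{I}_A^{(\mathfrak{P})} = h(\mathcal{I}_{fix}(\Phi_{\mathfrak{P}})) = \mathcal{A}_{\mathfrak{P}}(\Phi_{\mathfrak{P}})$ by construction, whence $\mathcal{K}^{(\mathfrak{P})} = \mathcal{A}_{\mathfrak{P}}/\mathcal{A}_{\mathfrak{P}} = 1$.

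The only genuinely delicate point is the inversion of $\mathcal{I}_{fix}(\Phi_{\mathfrak{P}})$, and this is where I expect the main obstacle to sit: nonconstancy guarantees a nonvanishing derivative only on an open set, so the inverse $\Phi_{\mathfrak{P}}(\mathcal{I}_{fix})$, and hence $\mathcal{K}^{(\mathfrak{P})}$, are in general only defined locally, away from the critical points of $\mathcal{I}_{fix}$. I would state the result with this local caveat, consistent with the paper's treatment of the $\mathcal{I}_i$ as local functionals, and note that the construction is insensitive to which branch of the inverse is chosen, since $h$ is fixed by the restriction of $\mathcal{A}_{\mathfrak{P}}$ to the corresponding range of $\Phi_{\mathfrak{P}}$.
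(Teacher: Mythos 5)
Your construction reproduces the paper's argument for one case only, and silently assumes that this case always holds: you refer to $\mathcal{A}_{\mathfrak{P}}$ as ``the fixed curvature coupling'', but nothing in the hypotheses guarantees that $\mathcal{A}$ is among the two functionals fixed by $\mathfrak{P}$. Definition~\ref{def:parametrization} allows any two of $\left\lbrace \mathcal{A},\,\mathcal{B},\,\mathcal{V},\,\alpha \right\rbrace$ to be fixed; for instance fixing $\mathcal{V}$ and $\alpha$ (the case whose fixed invariant is $\mathcal{I}_4$), or fixing $\mathcal{B}$ together with $\mathcal{V}$ or $\alpha$, leaves $\mathcal{A}_{\mathfrak{P}}$ an arbitrary functional. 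In such a parametrization your composite $h = \mathcal{A}_{\mathfrak{P}}\bigl(\Phi_{\mathfrak{P}}(\mathcal{I}_{fix})\bigr)$ is not a function determined by $\mathfrak{P}$: it differs from theory to theory and cannot be written down in the generic case without first solving for the transformation that carries the given theory into $\mathfrak{P}$, which defeats the purpose of $\mathcal{K}^{(\mathfrak{P})}$ --- it must be computable from the generic $\left\lbrace \mathcal{A},\,\mathcal{B},\,\mathcal{V},\,\alpha \right\rbrace$, since it feeds into the invariant metric $\hat{g}^{(\mathfrak{P})}_{\mu\nu}$ of Theorem~\ref{theorem2}. So your proof, as written, establishes the lemma only for parametrizations in which $\mathcal{A}$ itself is fixed.

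The paper closes this gap with a counting observation plus a case analysis that your plan is missing. Since exactly two functionals are fixed and $\mathcal{B}$ can account for at most one of them, at least one of $\mathcal{A}$, $\mathcal{V}$, $\alpha$ is always fixed. If it is $\mathcal{A}$, the paper's construction is exactly yours: $\mathcal{K}^{(\mathfrak{P})} = \mathcal{A}/\mathcal{A}^{(\mathfrak{P})}$ with $\mathcal{A}^{(\mathfrak{P})} \equiv \mathcal{A}_{\mathfrak{P}}(\mathcal{I}^{(\mathfrak{P})})$ an invariant. If instead $\mathcal{V}$ or $\alpha$ is the fixed one, the quotient must be adapted to the different transformation behaviour of these functionals, e.g.\ $\mathcal{K}^{(\mathfrak{P})} = \bigl(\mathcal{V}/\mathcal{V}^{(\mathfrak{P})}\bigr)^{1/2}$, since $\mathcal{V}$ carries the factor $e^{-4\bar{\gamma}}$ by \eqref{V_transformation}, or $\mathcal{K}^{(\mathfrak{P})} = e^{2(\alpha - \alpha^{(\mathfrak{P})})}$, since $\alpha$ transforms additively by \eqref{alpha_transformation}; in both cases the resulting object equals $1$ in $\mathfrak{P}$ and picks up exactly the factor $e^{-2\bar{\gamma}}$ demanded by \eqref{A_transformation}. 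This is what the paper's ``proceeds analogically'' hides, and it is precisely the part your argument cannot reach. Your treatment of the inversion itself (local invertibility away from critical points, insensitivity to the choice of branch) is sound --- indeed more careful than the paper, which merely sets multivaluedness aside --- but the missing case analysis is a genuine gap, not a presentational difference.
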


\noindent Note that by writing $\mathcal{K}^{(\mathfrak{P})}(\Phi)$ we abuse the notation \eqref{invariant_case} since it is not an invariant but we make an exception because it is determined by a parametrization $\mathfrak{P}$ and yet does not have to be evaluated in $\mathfrak{P}$.

\begin{proof}
	Let us consider a parametrization $\mathfrak{P}$. If the premiss is fulfilled then $\mathcal{I}_{fix}(\Phi_\mathfrak{P}) = h(\Phi_{\mathfrak{P}})$ is a known nonconstant local functional. We invert the latter to obtain a possibly multivalued relation $\Phi_\mathfrak{P} = h^{-1}(\mathcal{I}_{fix})$. In the current paper we do not consider the consequences of multivaluedness. According to the Corollary~\ref{cor1} a functional of an invariant is also an invariant and therefore in the parametrization $\mathfrak{P}$ it is meaningful to write $\Phi_{\mathfrak{P}} = \mathcal{I}^{(\mathfrak{P})}$ where $\mathcal{I}^{(\mathfrak{P})} \equiv h^{-1}(\mathcal{I}_{fix})$. Note that $\mathcal{I}^{(\mathfrak{P})}$ is determined by the parametrization $\mathfrak{P}$ but otherwise is an ordinary invariant. In a sense $\Phi_{\mathfrak{P}} = \mathcal{I}^{(\mathfrak{P})}(\Phi)$ relates the scalar field $\Phi_\mathfrak{P}$ to a generic scalar field $\Phi$.
	
	According to the Definition~\ref{def:parametrization} two out of the four arbitrary local functionals $\left\lbrace \mathcal{A},\,\mathcal{B},\,\mathcal{V},\,\alpha  \right\rbrace$ of $\Phi$ have gained a fixed functional form. Therefore one must be either $\mathcal{A}$, $\mathcal{V}$ or $\alpha$. 
	
	First let us consider the case where the functional form of $\mathcal{A}_\mathfrak{P}(\Phi_\mathfrak{P}) \equiv \left.\mathcal{A}(\Phi)\right|_\mathfrak{P}$ is fixed. We make use of the result $\Phi_{\mathfrak{P}} = \mathcal{I}^{(\mathfrak{P})}$ and replace the argument of $\mathcal{A}_{\mathfrak{P}}(\Phi_\mathfrak{P})$ as $\mathcal{A}_{\mathfrak{P}} \equiv \mathcal{A}_{\mathfrak{P}}(\mathcal{I}^{(\mathfrak{P})})$. The Corollary~\ref{cor1} states that the obtained quantity is an invariant. By making use of the notation introduced in \eqref{invariant_case} we write $\mathcal{A}^{(\mathfrak{P})}(\Phi) \equiv \mathcal{A}_{\mathfrak{P}}(\mathcal{I}^{(\mathfrak{P})}(\Phi))$ to denote an invariant with the property $\left.\mathcal{A}^{(\mathfrak{P})}( \Phi )\right|_\mathfrak{P} = \mathcal{A}_\mathfrak{P}(\Phi_\mathfrak{P})$. Hence $\mathcal{A}^{(\mathfrak{P})}(\Phi)$ is an invariant which is determined by the parametrization $\mathfrak{P}$ but can be considered in whatever case. In the generic case the quotient
	\begin{equation}
	\label{one_1}
	\mathcal{K}^{(\mathfrak{P})}(\Phi) \equiv \frac{\mathcal{A}(\Phi)}{\mathcal{A}^{(\mathfrak{P})}(\Phi)}
	\end{equation} 
	is a local functional of $\Phi$ that transforms as $\mathcal{A}(\Phi)$ and in the parametrization $\mathfrak{P}$ we obtain that $\left.\mathcal{K}^{(\mathfrak{P})}\right|_\mathfrak{P} = 1$. 
	
	The proof in the case when the functional form of either $\mathcal{V}$ or $\alpha$ is fixed proceeds analogically.
\end{proof}

\noindent Note that formally each functional that transforms as $\mathcal{A}$, i.e.\ according to \eqref{A_transformation}, can be written as a product of $\mathcal{A}$ and some invariant, e.g.\ $e^{2\alpha} \equiv \mathcal{A}\,\mathcal{I}_1$.

\begin{ex}\label{example1} 
	We consider a parametrization $\mathfrak{P}$ where $\mathcal{A}_\mathfrak{P} \equiv \Phi_\mathfrak{P}$ and $e^{2\alpha_\mathfrak{P}} \equiv 1 + \lambda\Phi_\mathfrak{P}$. Here $\lambda$ is some constant parameter. The scalar field $\Phi_\mathfrak{P}$ can be expressed as a local functional of the fixed invariant $\mathcal{I}_1$ as follows
	\begin{equation}
	\Phi_\mathfrak{P} = \frac{1}{\mathcal{I}_1 - \lambda} \equiv \mathcal{I}^{(\mathfrak{P})} \,.
	\end{equation}
	Hence $\mathcal{A}^{(\mathfrak{P})} \equiv \mathcal{I}^{(\mathfrak{P})}$ and the quotient
	\begin{equation}
	\label{quotient}
	\mathcal{K}^{(\mathfrak{P})}(\Phi) \equiv \frac{\mathcal{A}(\Phi)}{\mathcal{A}^{(\mathfrak{P})} ( \Phi )} \equiv \left(\mathcal{I}_1(\Phi) - \lambda\right) \mathcal{A}(\Phi) \,
	\end{equation}
	has the demanded properties. A direct calculation shows that if we use an analogous procedure but consider $e^{2\alpha_\mathfrak{P}}$ instead of $\mathcal{A}_\mathfrak{P}$ then we get the same result.
	
	The result for JF BDBW parametrization \eqref{Jordan_frame} is obtained by fixing $\lambda \equiv 0$. In that case the result \eqref{quotient} reduces to 
	\begin{equation}
	\mathcal{K}^{(\mathfrak{J})} \equiv \mathcal{A}\,\mathcal{I}_1 \equiv e^{2\alpha}
	\end{equation}
	which in JF BDBW parametrization is indeed equal to one and in the generic case transforms as $\mathcal{A}$. For EF canonical parametrization \eqref{Einstein_frame} $\mathcal{K}^{(\mathfrak{E})} \equiv \mathcal{A}$.
\end{ex}

The relation $\Phi_\mathfrak{P} = \mathcal{I}^{(\mathfrak{P})}$ in the parametrization $\mathfrak{P}$, obtained in the proof of the Lemma~\ref{lem:gamma}, introduces an another object which in the parametri\-zation $\mathfrak{P}$ is equal to one but has a specific transformation property. Namely in the parametrization $\mathfrak{P}$
\begin{equation}
	\label{one_2}
		1 = \frac{\delta \Phi_\mathfrak{P}}{\delta \Phi_\mathfrak{P}} = \frac{\delta \mathcal{I}^{(\mathfrak{P})}}{ \delta \Phi_\mathfrak{P} } \,.
\end{equation}
In the generic case $\bar{\mathcal{I}}^{(\mathfrak{P})\,\prime} = \bar{f}^\prime \mathcal{I}^{(\mathfrak{P})\,\prime}$.

\begin{thm}\label{theorem2}
	If, due to specifying the parametrization to be $\mathfrak{P}$, an invariant gains a fixed nonconstant functional form then there exists an `invariant pair'
	\begin{equation}
	\label{invariant_pair}
	\left( \hat{g}^{(\mathfrak{P})}_{\mu\nu} ,\, \mathcal{I}^{(\mathfrak{P})} \right) \,
	\end{equation}
	which in the parametrization $\mathfrak{P}$ functionally coincides with the pair $\left( g^{\mathfrak{P}}_{\mu\nu},\Phi_\mathfrak{P} \right)$.
\end{thm}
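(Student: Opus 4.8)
The plan is to assemble the two entries of the invariant pair \eqref{invariant_pair} separately out of the two objects already delivered by Lemma~\ref{lem:gamma}, and then to check that each entry is genuinely of the invariant type while reducing correctly in the parametrization $\mathfrak{P}$. The premiss of the theorem is precisely the premiss of Lemma~\ref{lem:gamma}, namely that an invariant $\mathcal{I}_{fix}$ acquires a fixed nonconstant functional form. Applying that lemma supplies the invariant $\mathcal{I}^{(\mathfrak{P})} \equiv h^{-1}(\mathcal{I}_{fix})$ together with the relation $\Phi_\mathfrak{P} = \mathcal{I}^{(\mathfrak{P})}$ holding in $\mathfrak{P}$. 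Since $\mathcal{I}^{(\mathfrak{P})}$ is a functional of an invariant it is itself an invariant by Corollary~\ref{cor1}, and by construction its restriction to $\mathfrak{P}$ is $\Phi_\mathfrak{P}$. This already furnishes the second entry of \eqref{invariant_pair}, so the work that remains is entirely on the metric side.

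For the metric entry I would use the functional $\mathcal{K}^{(\mathfrak{P})}(\Phi)$, also produced by Lemma~\ref{lem:gamma}, which equals $1$ in $\mathfrak{P}$ and transforms as $\mathcal{A}$ according to \eqref{A_transformation}. The definition to make is
\[
\hat{g}^{(\mathfrak{P})}_{\mu\nu} \equiv \mathcal{K}^{(\mathfrak{P})} g_{\mu\nu} \,.
\]
The decisive step is to confirm that this is an admissible invariant metric in the sense of \eqref{invariant_metric}. Invoking the observation recorded after the proof of Lemma~\ref{lem:gamma}, I would write $\mathcal{K}^{(\mathfrak{P})} = \mathcal{I}_i\,\mathcal{A}$ with $\mathcal{I}_i \equiv \mathcal{K}^{(\mathfrak{P})}/\mathcal{A}$; this $\mathcal{I}_i$ is an invariant precisely because numerator and denominator carry the same transformation law \eqref{A_transformation}. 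Hence $\hat{g}^{(\mathfrak{P})}_{\mu\nu} = \mathcal{I}_i\,\mathcal{A}\,g_{\mu\nu}$ has exactly the form \eqref{invariant_metric}, and its invariance under the Weyl rescaling \eqref{conformal_transformation} is explicit: the factor $e^{-2\bar{\gamma}}$ carried by $\mathcal{K}^{(\mathfrak{P})}$ cancels the factor $e^{2\bar{\gamma}}$ carried by $g_{\mu\nu}$, leaving $\bar{\mathcal{K}}^{(\mathfrak{P})}\bar{g}_{\mu\nu}$.

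It then remains only to evaluate this object in $\mathfrak{P}$: because $\left.\mathcal{K}^{(\mathfrak{P})}\right|_\mathfrak{P} = 1$ one gets $\left.\hat{g}^{(\mathfrak{P})}_{\mu\nu}\right|_\mathfrak{P} = g^{\mathfrak{P}}_{\mu\nu}$, so the first entry of \eqref{invariant_pair} functionally coincides with $g^{\mathfrak{P}}_{\mu\nu}$, and collecting the two entries establishes the pair. The main obstacle I expect is not any long computation but the verification that the construction is uniform across the three cases distinguished inside Lemma~\ref{lem:gamma}, namely whether $\mathcal{A}$, $\mathcal{V}$ or $\alpha$ is the fixed functional: one must be sure that in every one of these cases the single object $\mathcal{K}^{(\mathfrak{P})}$ the lemma returns does transform as $\mathcal{A}$, so that $\mathcal{K}^{(\mathfrak{P})}g_{\mu\nu}$ is Weyl-invariant and the identification with \eqref{invariant_metric} goes through. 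Once that uniformity is granted, both the recognition of $\hat{g}^{(\mathfrak{P})}_{\mu\nu}$ as an invariant metric and its reduction in $\mathfrak{P}$ follow immediately, as does the analogous statement for $\mathcal{I}^{(\mathfrak{P})}$.
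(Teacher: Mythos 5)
Your proposal is correct and follows essentially the same route as the paper's own proof: both invoke Lemma~\ref{lem:gamma} to obtain $\mathcal{K}^{(\mathfrak{P})}$ and $\mathcal{I}^{(\mathfrak{P})}$, define $\hat{g}^{(\mathfrak{P})}_{\mu\nu} \equiv \mathcal{K}^{(\mathfrak{P})} g_{\mu\nu}$, and evaluate both entries in $\mathfrak{P}$ using $\left.\mathcal{K}^{(\mathfrak{P})}\right|_\mathfrak{P}=1$ and $\left.\mathcal{I}^{(\mathfrak{P})}\right|_\mathfrak{P}=\Phi_\mathfrak{P}$. Your extra check that $\mathcal{K}^{(\mathfrak{P})}g_{\mu\nu}$ fits the form \eqref{invariant_metric} via $\mathcal{K}^{(\mathfrak{P})}=\mathcal{I}_i\,\mathcal{A}$ just makes explicit the remark the paper places after Lemma~\ref{lem:gamma}, and your worry about uniformity across the three cases is already discharged by the statement of that lemma, which asserts the transformation property of $\mathcal{K}^{(\mathfrak{P})}$ regardless of which functional is fixed.
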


\begin{proof}
	Let us consider a parametrization $\mathfrak{P}$. If the premiss holds then the Lemma~\ref{lem:gamma} proposes the existence of the functional $\mathcal{K}^{(\mathfrak{P})}(\Phi)$ which has the properties: $ \mathcal{K}^{(\mathfrak{P})} = e^{-2\bar{\gamma}} \bar{\mathcal{K}}^{(\mathfrak{P})}$ and $\left.\mathcal{K}^{(\mathfrak{P})}(\Phi)\right|_\mathfrak{P} = 1 $. Hence $\hat{g}^{(\mathfrak{P})}_{\mu\nu} \equiv \mathcal{K}^{(\mathfrak{P})}g_{\mu\nu}$ is an invariant metric \eqref{invariant_metric} and in the parametrization $\mathfrak{P}$
	\begin{equation}
	\left.\hat{g}^{(\mathfrak{P})}_{\mu\nu} \right|_\mathfrak{P} = g^{\mathfrak{P}}_{\mu\nu} \,.
	\end{equation}
	In the same spirit $\left.\mathcal{I}^{(\mathfrak{P})} \right|_\mathfrak{P} = \Phi_\mathfrak{P}$ holds by the definition introduced in the proof of the Lemma~\ref{lem:gamma}.
\end{proof}

\begin{ex}
	In JF BDBW parametrization \eqref{Jordan_frame}
	\begin{equation}
	\label{invariant_pair_in_JF_BDBW}
	\left.\left( \hat{g}^{(\mathfrak{J})}_{\mu\nu},\,\frac{1}{\mathcal{I}_1} \right)\right|_{\mathfrak{J}} \equiv \left.\left( e^{2\alpha}g_{\mu\nu},\,\frac{1}{\mathcal{I}_1} \right)\right|_{\mathfrak{J}} = (g^{\mathfrak{J}}_{\mu\nu}, \Psi)\,.
	\end{equation}
	In EF canonical parametrization \eqref{Einstein_frame} 
	\begin{equation}
	\label{invariant_pair_in_EF_canonical}
	\left.\left(\hat{g}^{(\mathfrak{E})}_{\mu\nu},\,\pm\,\mathcal{I}_3\right)\right|_{\mathfrak{E}} \equiv \left.\left(\mathcal{A}g_{\mu\nu},\,\pm\,\mathcal{I}_3 \right)\right|_{\mathfrak{E}} = (g^{\mathfrak{E}}_{\mu\nu},\,\varphi) \,.
	\end{equation}
\end{ex}

Let us take the metric tensor from the invariant pair \eqref{invariant_pair}, determined by some parametrization $\mathfrak{P}$, and calculate the invariant Ricci scalar $\hat{R}^{(\mathfrak{P})}$ for that metric tensor. In the parametrization $\mathfrak{P}$ the invariant Ricci scalar $\hat{R}^{(\mathfrak{P})}$ functionally coincides with the Ricci scalar $R_{\mathfrak{P}}$ that is calculated using the metric tensor $g^{\mathfrak{P}}_{\mu\nu}$.

\begin{ex}
	Let us consider JF BDBW parametrization \eqref{Jordan_frame} that determines the invariant pair \eqref{invariant_pair_in_JF_BDBW}. One can show that \cite{Nakahara}
	\begin{equation}
	\label{Ricci_scalar_in_Jordan_frame}
	e^{2\alpha}\hat{R}^{(\mathfrak{J})} = R - 6 g^{\mu\nu} \nabla_\mu \alpha \nabla_\nu \alpha - 6g^{\mu\nu}\nabla_\mu \nabla_\nu \alpha \,
	\end{equation}
	where the r.h.s.\ is calculated for the generic case \eqref{general_case}. Restricting Eq.~\eqref{Ricci_scalar_in_Jordan_frame} to the JF BDBW parametrization \eqref{Jordan_frame} ($\alpha\equiv 0$) gives us the equality
	\begin{equation}
	\left.\hat{R}^{(\mathfrak{J})}\right|_{\mathfrak{J}} = R_{\mathfrak{J}} \,.
	\end{equation}
	The result \eqref{Ricci_scalar_in_Jordan_frame} resembles the transformation of the Ricci scalar under the Weyl rescaling. Here, in the spirit of the Remark~\ref{rem3}, we introduce additional terms to cancel the effect of the conformal transformation on the Ricci scalar. 
\end{ex}

\section{The relation between the generic case and a chosen parametrization revisited. The translation rules.}\label{translation_rules}

Let us consider an invariant pair \eqref{invariant_pair} determined by a parametrization $\mathfrak{P}$.
If one rewrites the action functional \eqref{fl_moju} using the components of the invariant pair \eqref{invariant_pair} as the dynamical variables then four invariants, which we shall denote as $\left\lbrace \mathcal{I}^{(\mathfrak{P})}_\mathcal{A},\, \mathcal{I}^{(\mathfrak{P})}_\mathcal{B},\, \mathcal{I}^{(\mathfrak{P})}_\mathcal{V},\, \mathcal{I}^{(\mathfrak{P})}_\alpha \right\rbrace$, appear into the positions of the four arbitrary local functionals $\left\lbrace \mathcal{A},\, \mathcal{B},\, \mathcal{V},\, \alpha \right\rbrace$.

Such a claim can be reasoned as follows. Using the invariant metric $\hat{g}^{(\mathfrak{P})}_{\mu\nu}$ to calculate geometrical quantities guarantees that the latter are invariant under the transformations \eqref{conformal_transformation}-\eqref{field_redefinition}. In the same spirit the kinetic term for $\mathcal{I}^{(\mathfrak{P})}$ is invariant as well. Therefore there is no mixing of the additive terms in the action functional $S\left[\hat{g}^{(\mathfrak{P})}_{\mu\nu},\, \mathcal{I}^{(\mathfrak{P})},\, \chi \right]$ under the transformations \eqref{conformal_transformation}-\eqref{field_redefinition}. We conclude that for such an action functional each additive term must be an invariant by itself because we have assumed the action functional \eqref{fl_moju} to be invariant. Each of the four arbitrary local functionals $\left\lbrace \mathcal{A},\, \mathcal{B},\, \mathcal{V},\, \alpha \right\rbrace$ multiplies an object which after rewriting is replaced by an invariant. Therefore during the rewriting process the four arbitrary local functionals must be replaced by invariants as well.

\begin{ex}
	First let us consider JF BDBW parametrization \eqref{Jordan_frame}. Rewriting the action functional \eqref{fl_moju} in terms of the invariant pair \eqref{invariant_pair_in_JF_BDBW} reads
	\begin{align}
	\nonumber
	S = \frac{1}{2\kappa^2} \int_{V_4} d^4x\sqrt{-\hat{g}^{(\mathfrak{J})}} \Bigg\lbrace \frac{1}{\mathcal{I}_1} \hat{R}^{(\mathfrak{J})} &- \mathcal{I}_1\frac{1}{2}\left( \frac{1}{\mathcal{I}_5}\negmedspace - \negmedspace 3 \right) \hat{g}^{(\mathfrak{J})\mu\nu} \hat{\nabla}^{(\mathfrak{J})}_\mu  \frac{1}{\mathcal{I}_1}  \hat{\nabla}^{(\mathfrak{J})}_\nu \frac{1}{\mathcal{I}_1} \\
	\label{action_in_terms_of_invariants_in_JF_BDBW}
	&-2\ell^{-2} \mathcal{I}_4 \Bigg\rbrace + S_m\left[ \hat{g}^{(\mathfrak{J})}_{\mu\nu},\,\chi \right] \,.
	\end{align}
	Here we have made use of the definitions \eqref{I_1}-\eqref{I_3} and \eqref{I_4} and of the result \eqref{Ricci_scalar_in_Jordan_frame}. Hence $\mathcal{I}^{(\mathfrak{J})}_\mathcal{A} = \frac{1}{\mathcal{I}_1}$, $\mathcal{I}^{(\mathfrak{J})}_\mathcal{B} = \mathcal{I}_1\frac{1}{2}\left( \frac{1}{\mathcal{I}_5}\negmedspace - \negmedspace 3 \right)$, $\mathcal{I}^{(\mathfrak{J})}_\mathcal{V} = \mathcal{I}_4$ and $\mathcal{I}^{(\mathfrak{J})}_\alpha = 0$.
	
	Second let us consider EF canonical parametrization \eqref{Einstein_frame} and the corresponding invariant pair \eqref{invariant_pair_in_EF_canonical}. One can rewrite the action functional \eqref{fl_moju} as
	\begin{align}
	\nonumber
	S  = \frac{1}{2\kappa^2} \int_{V_4} d^4x \sqrt{-\hat{g}^{(\mathfrak{E})}} \Bigg\lbrace \hat{R}^{(\mathfrak{E})} -& 2 \hat{g}^{(\mathfrak{E})\mu\nu} \hat{\nabla}^{(\mathfrak{E})}_\mu \mathcal{I}_3 \hat{\nabla}^{(\mathfrak{E})}_\nu\mathcal{I}_3 - 2\ell^{-2}\mathcal{I}_2 \Bigg\rbrace\\
	\label{action_functional_in_terms_of_invariants_for_EF_can}
	&+ S_m\left[ \mathcal{I}_1 \hat{g}^{(\mathfrak{E})}_{\mu\nu},\,\chi \right] \,.
	\end{align}
	In this example $\mathcal{I}^{(\mathfrak{E})}_\mathcal{A} = 1$, $\mathcal{I}^{(\mathfrak{E})}_\mathcal{B} = 2$, $\mathcal{I}^{(\mathfrak{E})}_\mathcal{V} = \mathcal{I}_2$ and $\mathcal{I}^{(\mathfrak{E})}_\alpha = \frac{1}{2}\ln \mathcal{I}_1$.
\end{ex}

Rewriting the action functional \eqref{fl_moju} in terms of an invariant pair \eqref{invariant_pair} retains the generality of the theory up to some minor details that we shall not discuss in the current paper.

\begin{cor}
	Let us rewrite the general action functional $S = S\left[ g_{\mu\nu},\Phi,\chi \right]$, defined by \eqref{fl_moju}, using the components of an invariant pair $\left( \hat{g}^{(\mathfrak{P})}_{\mu\nu} ,\, \mathcal{I}^{(\mathfrak{P})} \right)$, determined by a parametrization $\mathfrak{P}$, as dynamical variables. We end up with an action functional $S = S\left[ \hat{g}^{(\mathfrak{P})}_{\mu\nu},\mathcal{I}^{ (\mathfrak{P}) },\chi \right]$ involving a boundary term which we shall neglect. Let us focus upon the action functional in terms of the invariants. If we specify the theory by fixing the parametrization to be $\mathfrak{P}$ then each invariant quantity is mapped to the corresponding noninvariant quantity in the parametrization $\mathfrak{P}$ as follows
	\begin{equation}
	\label{mapping_rules}
	\begin{tabular}{rclcrcl}
	$\hat{g}^{(\mathfrak{P})}_{\mu\nu} $ &$\mapsto$& $g^{\mathfrak{P}}_{\mu\nu}$\qquad &, \qquad &$\mathcal{I}^{(\mathfrak{P})}_\mathcal{A}$ &$\mapsto$& $\mathcal{A}_\mathfrak{P}$\,,  \\
	$\sqrt{-\hat{g}^{(\mathfrak{P})}} $&$\mapsto$& $\sqrt{-g^{\mathfrak{P}}}$\qquad &,\qquad & $ \mathcal{I}^{(\mathfrak{P})}_\mathcal{B} $ &$\mapsto$& $\mathcal{B}_\mathfrak{P}$\,, \\
	$ \hat{R}^{(\mathfrak{P})}$&$\mapsto$& $R_{\mathfrak{P}}$ \qquad&,\qquad & $\mathcal{I}^{(\mathfrak{P})}_\mathcal{V}$ &$\mapsto$& $\mathcal{V}_\mathfrak{P}$\,, \\
	$\hat{\nabla}^{(\mathfrak{P})}_\mu $ &$\mapsto$& $\nabla^{\mathfrak{P}}_\mu$ \qquad&,\qquad & $\mathcal{I}^{(\mathfrak{P})}_\alpha $ &$\mapsto$& $\alpha_\mathfrak{P}$\,, \\
	$ \mathcal{I}^{(\mathfrak{P})} $ &$\mapsto$& $\Phi_\mathfrak{P}$\qquad&. & &&
	\end{tabular}
	\end{equation}
\end{cor}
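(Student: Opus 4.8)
The plan is to obtain the table \eqref{mapping_rules} as a term-by-term comparison of two presentations of the action functional \eqref{fl_moju}: the one rewritten in the invariant pair and the one written directly in the parametrization $\mathfrak{P}$. The starting observation is the structural claim reasoned at the beginning of Section~\ref{translation_rules}, namely that after the rewriting $S\big[\hat{g}^{(\mathfrak{P})}_{\mu\nu},\mathcal{I}^{(\mathfrak{P})},\chi\big]$ has precisely the form of \eqref{fl_moju}: the components $\hat{g}^{(\mathfrak{P})}_{\mu\nu}$ and $\mathcal{I}^{(\mathfrak{P})}$ play the roles of $g_{\mu\nu}$ and $\Phi$, the objects $\hat{R}^{(\mathfrak{P})}$ and $\hat{\nabla}^{(\mathfrak{P})}_\mu$ are the curvature and covariant derivative of $\hat{g}^{(\mathfrak{P})}_{\mu\nu}$, and the four invariants $\mathcal{I}^{(\mathfrak{P})}_\mathcal{A},\mathcal{I}^{(\mathfrak{P})}_\mathcal{B},\mathcal{I}^{(\mathfrak{P})}_\mathcal{V},\mathcal{I}^{(\mathfrak{P})}_\alpha$ occupy the slots of $\mathcal{A},\mathcal{B},\mathcal{V},\alpha$.

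First I would dispose of the purely geometric entries. Theorem~\ref{theorem2} supplies the two anchoring restrictions $\hat{g}^{(\mathfrak{P})}_{\mu\nu}\big|_{\mathfrak{P}}=g^{\mathfrak{P}}_{\mu\nu}$ and $\mathcal{I}^{(\mathfrak{P})}\big|_{\mathfrak{P}}=\Phi_{\mathfrak{P}}$. Since the volume element, the Levi-Civita connection \eqref{invariant_connection} and the Ricci scalar are fixed functionals of the invariant metric alone, restricting $\hat{g}^{(\mathfrak{P})}_{\mu\nu}$ to $\mathfrak{P}$ propagates at once to $\sqrt{-\hat{g}^{(\mathfrak{P})}}\big|_{\mathfrak{P}}=\sqrt{-g^{\mathfrak{P}}}$, $\hat{\nabla}^{(\mathfrak{P})}_\mu\big|_{\mathfrak{P}}=\nabla^{\mathfrak{P}}_\mu$ and $\hat{R}^{(\mathfrak{P})}\big|_{\mathfrak{P}}=R_{\mathfrak{P}}$; the last equality was already exhibited for a worked example in \eqref{Ricci_scalar_in_Jordan_frame}.

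The heart of the argument is the right-hand column of \eqref{mapping_rules}. Substituting the geometric restrictions just found into $S\big[\hat{g}^{(\mathfrak{P})}_{\mu\nu},\mathcal{I}^{(\mathfrak{P})},\chi\big]\big|_{\mathfrak{P}}$ turns it into an integral of the form \eqref{fl_moju} built on $g^{\mathfrak{P}}_{\mu\nu}$, $\Phi_{\mathfrak{P}}$, $R_{\mathfrak{P}}$ and $\nabla^{\mathfrak{P}}_\mu$, but now carrying $\mathcal{I}^{(\mathfrak{P})}_\mathcal{A}\big|_{\mathfrak{P}},\dots,\mathcal{I}^{(\mathfrak{P})}_\alpha\big|_{\mathfrak{P}}$ in the coupling slots. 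Because the rewriting altered \eqref{fl_moju} only by the neglected boundary term, this must coincide with \eqref{fl_moju} written directly in $\mathfrak{P}$, whose slots hold $\mathcal{A}_{\mathfrak{P}},\mathcal{B}_{\mathfrak{P}},\mathcal{V}_{\mathfrak{P}},\alpha_{\mathfrak{P}}$. Equating the two and isolating the coefficient of $R_{\mathfrak{P}}$, the coefficient of the kinetic term $g^{\mathfrak{P}\,\mu\nu}\nabla^{\mathfrak{P}}_\mu\Phi_{\mathfrak{P}}\nabla^{\mathfrak{P}}_\nu\Phi_{\mathfrak{P}}$, the derivative-free term, and the conformal factor seen by $S_m$ yields $\mathcal{I}^{(\mathfrak{P})}_\mathcal{A}\big|_{\mathfrak{P}}=\mathcal{A}_{\mathfrak{P}}$, $\mathcal{I}^{(\mathfrak{P})}_\mathcal{B}\big|_{\mathfrak{P}}=\mathcal{B}_{\mathfrak{P}}$, $\mathcal{I}^{(\mathfrak{P})}_\mathcal{V}\big|_{\mathfrak{P}}=\mathcal{V}_{\mathfrak{P}}$, and, since the matter sector only sees $e^{2\mathcal{I}^{(\mathfrak{P})}_\alpha}\hat{g}^{(\mathfrak{P})}_{\mu\nu}$ which restricts to $e^{2\alpha_{\mathfrak{P}}}g^{\mathfrak{P}}_{\mu\nu}$, also $\mathcal{I}^{(\mathfrak{P})}_\alpha\big|_{\mathfrak{P}}=\alpha_{\mathfrak{P}}$.

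The step I expect to be the main obstacle is exactly this isolation of coefficients. It rests on the functional independence, within the representation \eqref{fl_moju}, of the curvature term, the kinetic term, the potential and the matter coupling --- the property that makes the four local functionals of $\Phi$ free data. To make it rigorous I would decompose the bulk integrand into pieces of distinct tensorial and derivative structure and appeal to the arbitrariness of admissible field configurations, so that each scalar-field coefficient can be extracted pointwise; I would also have to check that the boundary term discarded during the rewriting does not contaminate this bulk comparison.
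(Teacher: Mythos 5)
Your argument follows the paper's own route: the corollary is presented there as a direct consequence of Theorem~\ref{theorem2} together with the preceding observation that rewriting \eqref{fl_moju} in the invariant pair places invariants in the four coupling slots, which are exactly the two ingredients you combine, and your propagation of $\hat{g}^{(\mathfrak{P})}_{\mu\nu}\big|_{\mathfrak{P}}=g^{\mathfrak{P}}_{\mu\nu}$ to the volume element, connection and Ricci scalar matches the paper's discussion around \eqref{Ricci_scalar_in_Jordan_frame}. The only real difference is your final coefficient-isolation step, which the paper sidesteps because the restriction to $\mathfrak{P}$ is by construction the identity substitution ($\left.\mathcal{K}^{(\mathfrak{P})}\right|_{\mathfrak{P}}=1$ and $\left.\mathcal{I}^{(\mathfrak{P})}\right|_{\mathfrak{P}}=\Phi_{\mathfrak{P}}$), so the additive terms match one-to-one without any appeal to functional independence or arbitrariness of field configurations.
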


\begin{ex}
	First let us consider JF BDBW parametrization \eqref{Jordan_frame}. The action functional \eqref{fl_moju} rewritten in terms of the invariant pair \eqref{invariant_pair_in_JF_BDBW} is given by \eqref{action_in_terms_of_invariants_in_JF_BDBW}. A straightforward calculation shows that fixing the parametrization to be JF BDBW parametrization implies
	\begin{align}
	\left.\frac{1}{\mathcal{I}_1}\right|_{\mathfrak{J}}= \Psi \equiv \mathcal{A}_{\mathfrak{J}} \,,\quad &\left.\mathcal{I}_1\frac{1}{2}\left( \frac{1}{\mathcal{I}_5}\negmedspace - \negmedspace 3 \right)\right|_{\mathfrak{J}} = \frac{\omega(\Psi)}{\Psi} \equiv \mathcal{B}_{\mathfrak{J}} \,, \\ \left.\mathcal{I}_4\right|_{\mathfrak{J}} = \mathcal{V}_{\mathfrak{J}}(\Psi)\,,\quad &\mathcal{I}^{({\mathfrak{J}})}_\alpha = 0 = \alpha_{\mathfrak{J}} \,.
	\end{align}

	Second let us consider EF canonical parametrization \eqref{Einstein_frame}. The invariant pair \eqref{invariant_pair_in_EF_canonical} gives rise to the action functional \eqref{action_functional_in_terms_of_invariants_for_EF_can}. A direct calculation shows that 
	\begin{equation}
	1\equiv \mathcal{A}_{\mathfrak{E}} \,,\quad 2 \equiv \mathcal{B}_{\mathfrak{E}} \,,\quad \left.\mathcal{I}_2\right|_{\mathfrak{E}} = \mathcal{V}_{\mathfrak{E}}(\varphi) \,,\quad \left. \frac{1}{2}\ln \mathcal{I}_1 \right|_{\mathfrak{E}} = \alpha_{\mathfrak{E}}(\varphi) \,.
	\end{equation}
\end{ex}

\begin{rem}\label{cor4}
	Let us consider the case where we have two action functionals $S_1$ and $S_2$. The action $S_1 \equiv S_1\left[g^{\mathfrak{P}}_{\mu\nu},\,\Phi_\mathfrak{P},\,\chi\right]$ is obtained from \eqref{fl_moju} by fixing the parametrization to be $\mathfrak{P}$ and $S_2 \equiv S_2\left[ \hat{g}^{(\mathfrak{P})}_{\mu\nu},\, \mathcal{I}^{(\mathfrak{P})},\, \chi \right]$ is obtained by rewriting the action functional \eqref{fl_moju} in terms of the invariant pair \eqref{invariant_pair} that is determined by $\mathfrak{P}$. Suppose that we are given an action functional $S_3$ and we know that $S_3$ is either $S_1$ or $S_2$. Due to the one to one correspondence \eqref{mapping_rules} we cannot determine whether $S_3$ is $S_1$ or $S_2$ without a priori knowing how the quantities contained in $S_3$ transform, i.e.\ whether the transformation of the quantities obey Eqs.~\eqref{conformal_transformation}-\eqref{alpha_transformation} or the rules described after Eq.~\eqref{I_3}. Therefore without a priori given transformation rules the action functionals $S_1$ and $S_2$ cannot be distinguished.
\end{rem}
Let us point out that the redefinition of the scalar field can be seen as choosing a different invariant to be the dynamical variable.
\begin{ex}
	Lets us consider JF BDBW parametrization \eqref{Jordan_frame} scalar field $\Psi$ as a local functional of the EF canonical parametrization \eqref{Einstein_frame} scalar field $\varphi$. By comparing the invariant pairs \eqref{invariant_pair_in_JF_BDBW} and \eqref{invariant_pair_in_EF_canonical} we obtain that this corresponds to
	\begin{equation}
	\frac{1}{\mathcal{I}_1} \equiv \frac{1}{\mathcal{I}_1(\mathcal{I}_3)} \,.
	\end{equation}
	Hence
	\begin{equation}
	\label{EF_vs_JF_via_invariants}
	\left( \frac{\delta \Psi}{\delta \varphi} \right)^2 = \left(\frac{\delta \frac{1}{\mathcal{I}_1}}{\delta \mathcal{I}_3} \right)^2 = \left( \frac{\mathcal{I}_1^\prime}{\mathcal{I}_1^2 \, \mathcal{I}_3^\prime} \right)^2 = \frac{4 \mathcal{I}_5}{\mathcal{I}_1^2} \,
	\end{equation}
	where we made use of the definition \eqref{I_4}.
	If the result is evaluated in EF canonical parametrization \eqref{Einstein_frame} then it agrees with Eq.~\eqref{EF_to_JF}. If Eq.~\eqref{EF_vs_JF_via_invariants} is evaluated in JF BDBW parametrization \eqref{Jordan_frame} then it agrees with \eqref{JF_to_EF}.
\end{ex}

The one to one correspondence \eqref{mapping_rules} gives rise to the `translation rules' that were first implicitly used in Ref.~\cite{JKSV_1} and more thoroughly studied in Ref.~\cite{JKSV_2}. The translation rules can be used to rewrite the results obtained in some parametrization $\mathfrak{P}$ as the results of the generic case described by the action functional \eqref{fl_moju}. The key idea can be phrased as follows.
\begin{itemize}
	\item[i) ] Calculate the invariant pair \eqref{invariant_pair} determined by a parametrization $\mathfrak{P}$.
	\item[ii) ] Rewrite the action functional \eqref{fl_moju} in terms of the obtained invariant pair and determine the l.h.s.\ of the correspondence \eqref{mapping_rules}.
	\item[iii) ] Replace each quantity in the parametrization $\mathfrak{P}$ by the corresponding invariant, i.e.\ use the mapping \eqref{mapping_rules} backwards.
	\item[iv) ] Evaluate the obtained invariant quantities in terms of the four arbitrary local functionals $\left\lbrace \mathcal{A},\,\mathcal{B},\,\mathcal{V},\,\alpha \right\rbrace$ and use a generic metric tensor $g_{\mu\nu}$ and a generic scalar field $\Phi$ as dynamical variables.
\end{itemize}

Instead of following the second rule of the aforementioned prescription one can use the transformations \eqref{fl_fn_teisendused} to obtain the invariants that correspond to the four local functionals $\left\lbrace \mathcal{A}_\mathfrak{P},\,\mathcal{B}_\mathfrak{P},\, \mathcal{V}_\mathfrak{P},\, \alpha_\mathfrak{P} \right\rbrace$ in a parametrization $\mathfrak{P}$.

Namely, let us consider the quantities of the invariant case to be formally the ``barred" ones. The definition of the invariant metric in the invariant pair \eqref{invariant_pair}, i.e.\ $\hat{g}^{(\mathfrak{P})}_{\mu\nu} \equiv \mathcal{K}^{(\mathfrak{P})} g_{\mu\nu}$ can be seen as a Weyl rescaling of the metric tensor \eqref{conformal_transformation} where $e^{2\bar{\gamma}(\mathcal{I}^{(\mathfrak{P})})} = \left( \mathcal{K}^{(\mathfrak{P})}(\Phi( \mathcal{I}^{(\mathfrak{P})} )) \right)^{-1}$. The crucial point is that for generic case
\begin{equation}
	\left(\bar{\mathcal{K}}^{(\mathfrak{P})} \right)^{-1} = e^{-2\bar{\gamma}} \left( \mathcal{K}^{(\mathfrak{P})} \right)^{-1} \,.
\end{equation} 
Therefore using $e^{2\bar{\gamma}} = \left( \mathcal{K}^{(\mathfrak{P})} \right)^{-1}$ for performing the transformations \eqref{fl_fn_teisendused} actually, in the spirit of the Remark \ref{rem3}, introduces extra terms with suitable transformation properties to cancel the effect of the Weyl rescaling on the arbitrary local functionals $\left\lbrace \mathcal{A},\,\mathcal{B},\,\mathcal{V},\,\alpha  \right\rbrace$. Analogically $\bar{f}^\prime = \left( \mathcal{I}^{(\mathfrak{P})\,\prime} \right)^{-1}$.

There are noninvariant objects that in a parametrization $\mathfrak{P}$ are equal to one, e.g.\ \eqref{one_1} and \eqref{one_2} and various combinations of these. Therefore the translation rules cannot directly determine the transformation properties and hence can work fluently only in the case of invariant quantities. There are indirect ways to obtain the transformation properties as well, e.g.\ comparing the results calculated from different parametrizations.


\subsection*{Acknowledgment}

This work was supported by the Estonian Science Foundation Grant No.~8837, by the Estonian Research Council Grant No.~IUT02-27 and by the European Union through the European Regional Development Fund (Project No. 3.2.0101.11-0029). The author would like to thank Piret Kuusk, Laur J\"arv and Margus Saal for fruitful discussions.



\begin{thebibliography}{1}


\bibitem{BD}
C.~Brans and R.~H.~Dicke,
{\it Mach's principle and a relativistic theory of gravitation.}
Phys.\ Rev. \href{http://dx.doi.org/10.1103/PhysRev.124.925}{{\bf 124} (1961), 925-935}.

\bibitem{Dicke}
R.~H.~Dicke,
{\it Mach's principle and invariance under transformation of units.}
Phys.\ Rev. \href{http://dx.doi.org/10.1103/PhysRev.125.2163}{{\bf 125} (1962), 2163-2167}. 

\bibitem{Faraoni}
V. Faraoni, {\it Cosmology in Scalar-Tensor Gravity.} Kluwer Academic Publishers, 2004.	
%
		
\bibitem{Fierz}
M. Fierz, {\it \"Uber die physikalische Deutung der erweiterten Gravitationstheorie P Jordans.} Helv.\ Phys.\ Acta \href{http://dx.doi.org/10.5169/seals-112699}{{\bf 29} (1956), 128-134}.

		
\bibitem{Flanagan} 
\'E.~\'E.~Flanagan, 
{\it The conformal frame freedom in theories of gravitation.} 
Class.\ Quant.\ Grav. \href{http://dx.doi.org/10.1088/0264-9381/21/15/N02}{{\bf 21} (2004), 3817-3829},
[\href{http://arxiv.org/abs/gr-qc/0403063}{arXiv:gr-qc/0403063}].

\bibitem{Maeda}
Y.~Fujii, K.-I.~Maeda, {\it The Scalar-Tensor Theory of Gravitation.} 1st Edition, Cambridge University Press, 2003.
%

\bibitem{jordan}
P.~Jordan, {\it Zur empirischen Kosmologie.} Naturwiss. \href{http://dx.doi.org/10.1007/BF01679076}{{\bf 26} (1938), 417-421}.

\bibitem{JKSV_1}
L.~J\"arv, P.~Kuusk, M.~Saal and O.~Vilson,
{\it Parametrizations in scalar-tensor theories of gravity and the limit of general relativity.} J.\ Phys.\ Conf.\ Ser.\, \href{http://dx.doi.org/10.1088/1742-6596/532/1/012011}{{\bf 532}, 012011 (2014)}, [\href{http://arxiv.org/abs/1501.07781}{arXiv:1501.07781 [gr-qc]}].

\bibitem{JKSV_2}
L.~J\"arv, P.~Kuusk, M.~Saal and O.~Vilson,
{\it Invariant quantities in the scalar-tensor theories of gravitation.} Phys. Rev. D \href{http://dx.doi.org/10.1103/PhysRevD.91.024041}{{\bf 91}, 024041 (2015)}, [\href{http://arxiv.org/abs/1411.1947}{arXiv:1411.1947 [gr-qc]}].

\bibitem{Nakahara}
M.~Nakahara,
{\it Geometry, Topology and Physics.} 2nd Edition, Taylor \& Francis Group, 2003.
	
\end{thebibliography}
\end{document}